\newcounter{plm_example}
\newcounter{plm_thm}
\newtheorem{thm}[plm_thm]{Theorem}
\newtheorem{example}[plm_example]{Example}
\definecolor{grey}{rgb}{0.7,0.7,0.7}
\definecolor{lgrey}{rgb}{0.9,.7,0.7}
\newcommand{\zz}{{\zeta}}
\begin{document}
\def\spacingset#1{\def\baselinestretch{#1}\small\normalsize}

\setlength{\parindent}{10pt}
\parskip 2pt
\def\spacingset#1{\def\baselinestretch{#1}\small\normalsize}
\newcommand{\E}{\operatorname{E}}
\newcommand{\bX}{\mathbf X}
\newcommand{\bH}{\mathbf H}
\newcommand{\bA}{\mathbf A}
\newcommand{\bB}{\mathbf B}
\newcommand{\bN}{\mathbf N}

\newcommand{\bHo}{{\stackrel{\circ}{\mathbf H}}}

\newcommand{\bHom}{{\stackrel{\circ\,\,}{{\mathbf H}_{t}}}{\hspace*{-13pt}\phantom{H}}^-}
\newcommand{\bHop}{{\stackrel{\circ\,\,}{{\mathbf H}_{t}}}{\hspace*{-13pt}\phantom{H}}^+}
\newcommand{\bHotwo}{{\stackrel{\circ\,\,\,\,\,}{{\mathbf H}_{t_2}}}{\hspace*{-16pt}\phantom{H}}^-}
\newcommand{\bHomn}{{\stackrel{\circ}{{\mathbf H}}}{\hspace*{-9pt}\phantom{H}}^-}
\newcommand{\bHopn}{{\stackrel{\circ}{{\mathbf H}}}{\hspace*{-9pt}\phantom{H}}^+}

\newcommand{\EbHom}{\E^{{\stackrel{\circ\,\,}{{\mathbf H}_{t}}}{\hspace*{-11pt}\phantom{H}}^-}}
\newcommand{\EbHop}{\E^{{\stackrel{\circ\,\,}{{\mathbf H}_{t}}}{\hspace*{-10pt}\phantom{H}}^+}}

\spacingset{1.2}

\title{Dynamic relations in sampled processes
\thanks{Supported by AFOSR-FA9550-17-1-0435, 
NSF-ECCS-1509387, 
ARO-W911NF-17-1-0429, 
 and SSF.}
}

\author{Tryphon T. Georgiou 
and Anders Lindquist
\thanks{T.T.\ Georgiou is with the Department of Mechanical and Aerospace Engineering,
University of California, Irvine, California; {email: tryphon@uci.edu}}
\thanks{A.\ Lindquist is with the Department of  Automation and the School of Mathematics, Shanghai Jiao Tong University, Shanghai, China, and the Department of Mathematics at 
KTH Royal Institute of Technology, Stockholm, Sweden; {email: alq@kth.se}}
} 

\pagestyle{empty}
\maketitle
\thispagestyle{empty}
\begin{abstract}
Linear dynamical relations that may exist in continuous-time, or at some natural sampling rate, are not directly discernable at reduced observational sampling rates.
Indeed, at reduced rates, matricial spectral densities of vectorial time series have maximal rank and thereby cannot be used to ascertain potential dynamic relations between their entries.
This hitherto undeclared source of inaccuracies appears to plague off-the-shelf identification techniques seeking remedy in hypothetical observational noise.
In this paper we explain the exact relation between stochastic models at different sampling rates and show how to construct
stochastic models at the finest time scale that data allows. 
We then point out that the correct number of dynamical dependencies can only be ascertained by considering stochastic models at this finest time scale, which in general is faster than the observational sampling rate.
%
\end{abstract}
\begin{IEEEkeywords} Identification, Stochastic systems, Sub-sampled models, Finest time-scale.
\end{IEEEkeywords}

\newcommand{\mR}{{\mathbb R}}
\newcommand{\mZ}{{\mathbb Z}}
\newcommand{\mN}{{\mathbb N}}
\newcommand{\mE}{{\mathbb E}}
\newcommand{\mC}{{\mathbb C}}
\newcommand{\mD}{{\mathbb D}}
\newcommand{\bU}{{\mathbf U}}
\newcommand{\bW}{{\mathbf W}}
\newcommand{\cF}{{\mathcal F}}

\newcommand{\trace}{{\rm trace}}
\newcommand{\rank}{{\rm rank}}
\newcommand{\Real}{{\Re}e\,}
\newcommand{\half}{{\frac12}}

\section{Introduction}
\noindent
\IEEEPARstart{S}{uppose} that we seek to identify linear dynamical relations that may exist between the components of a continuous-time process. What is typically available to us is the discrete-time sampled process (time series) of measurements collected at a given finite sampling rate. In this paper we are not concerned with issues of statistical estimation but instead assume that we can determine sufficiently accurately the spectral density of the sampled process.
The theme of this paper is on how, from the model parameters of the sampled process, one may obtain a maximal number of dependencies between the entries of the process at a suitably finer time scale.

Modern-day applications, which aim towards high dimensional data and possibly varying sampling protocols, further underscore the importance of a careful consideration of how sampling affects 
dependencies. For example, consider a network as in Figure~\ref{fig:network} with a continuous stationary stochastic vector process
$\zeta=(\zeta_1,\zeta_2,\dots,\zeta_p)'$, 
whose components correspond to specific nodes. Suppose there is a deterministic dynamical relationship, described by the transfer function $T(s)$, between $u=(\zeta_1,\zeta_2,\dots,\zeta_m)'$ and $y=(\zeta_{m+1},\zeta_{m+2},\dots,\zeta_p)'$ as depicted in Figure~\ref{fig:fig1}. 
An important question is whether it is possible to detect this dynamical dependency from an observed sampled process $\zeta_k :=\zeta(kh)$, where $h$ is the sampling period. 
The same issue occurs when the original  process is discrete-time and the observed process is subsampled at a slower rate. 

\begin{figure}[htb]\begin{center}
\hspace*{-15pt}\includegraphics[width=0.32\textwidth]{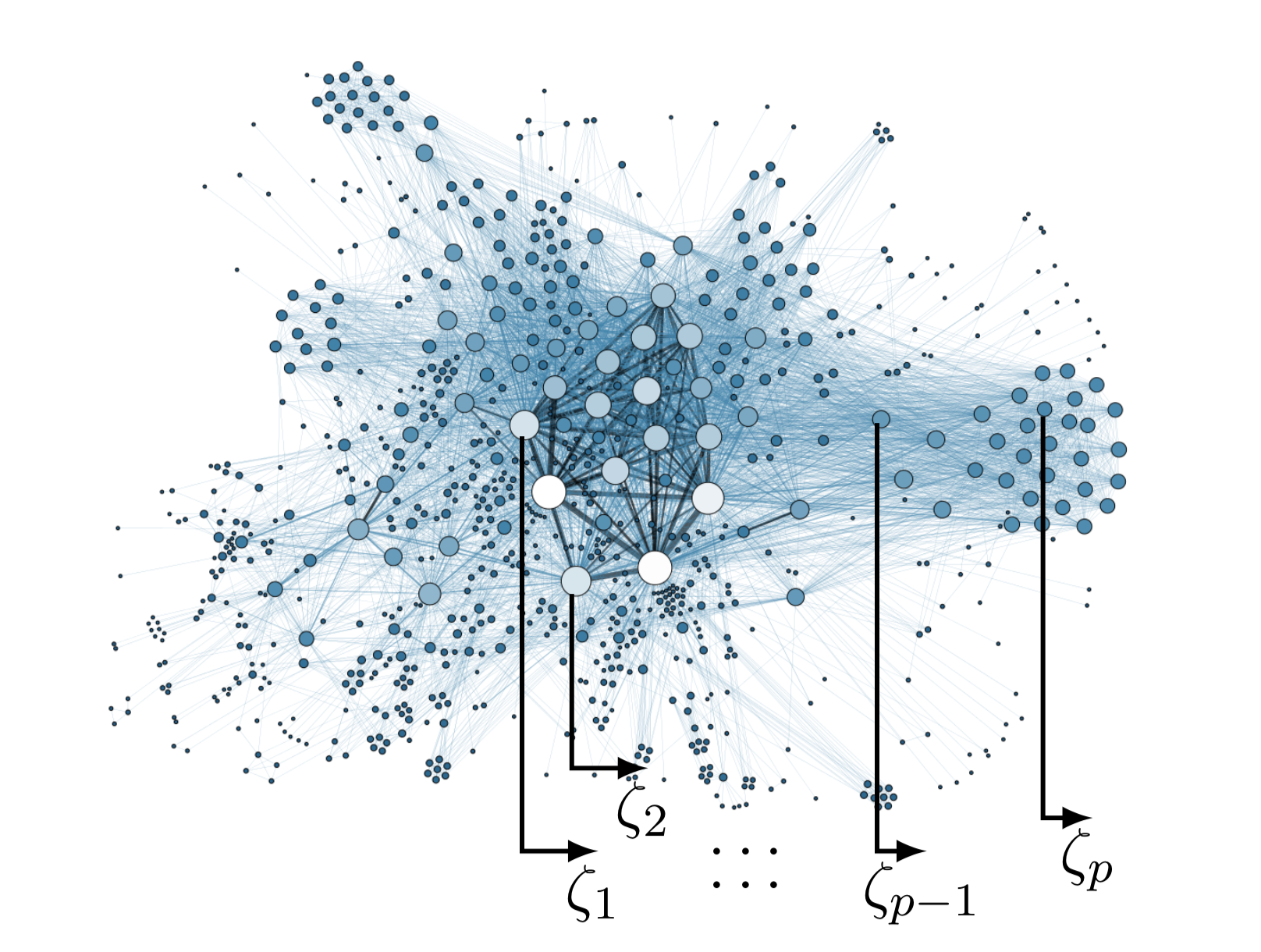}
   \caption{Processes in a network environment}
   \label{fig:network}
\end{center}\end{figure}
The preponderance of techniques in the literature implicitly assume that the observed time series inherits any dynamical dependencies and typically seek to identify relations for the discrete-time process at the observation sampling rate. However this assumption cannot be made in general. Dependencies that may exist in continuous-time, or at some other fine ``natural'' sampling rate, are obfuscated by the process of sampling or sub-sampling. This is reflected in the fact that while the nullity of the spectral density of the original process coincides with the number of linear relations, the density of the sampled process has generically maximal rank. Evidently, when the observational sampling rate is sufficiently fast, the density of the sampled process is close to a singular one with the correct nullity. In such cases,
\begin{figure}[htb]
\begin{center}
\hspace*{-10pt}\includegraphics[width=0.38\textwidth]{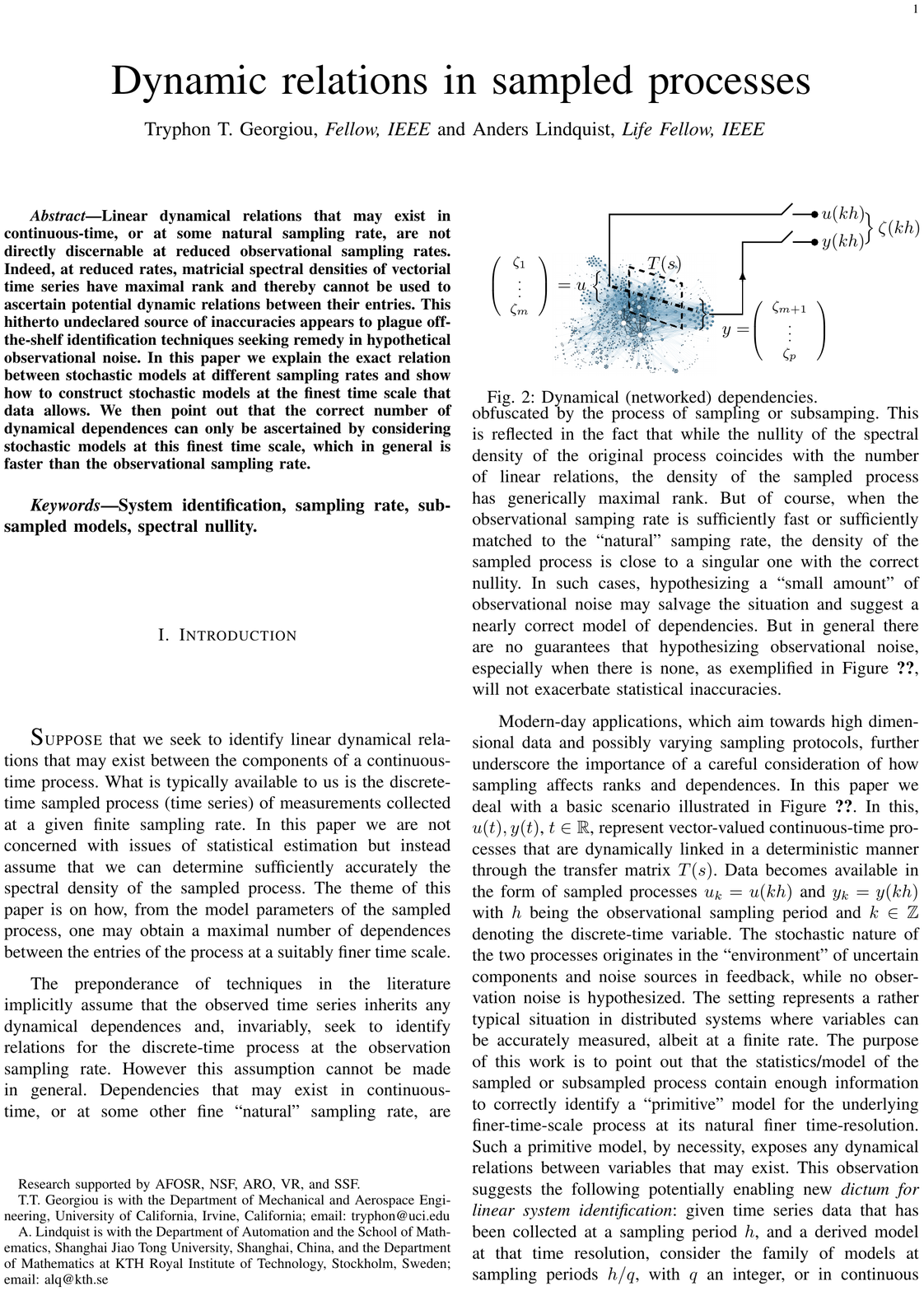}
   \caption{Dynamical (network) dependencies}
   \label{fig:fig1}
\end{center}\end{figure}
hypothesizing a ``small amount'' of observational noise may salvage the situation and suggest a nearly correct model for dynamic dependencies.
But in general there are no guarantees that hypothesizing observational noise, especially when there is none, as exemplified in Figure \ref{fig:fig1}, will not exacerbate statistical inaccuracies.

The stochastic nature of the two vector processes $u$ and $y$ in  Figure \ref{fig:fig1} originates in the ``environment'' of uncertain components and noise sources in feedback, while no observation noise is hypothesized. The setting represents a rather typical situation in distributed systems where variables can be accurately measured, albeit at a finite rate.  The purpose of this work is to point out that the statistics/model of the sampled or subsampled process contain enough information to correctly identify a ``primitive'' model for the underlying finer-time-scale process at its {\em natural finer time-resolution}. Such a primitive model, by necessity, exposes any dynamical relations between variables that may exist. This observation suggests the following potentially enabling new {\em dictum for linear system identification}: given time series data that has been collected at a sampling period $h$, and a derived model at that time resolution, consider the family of models at sampling periods $h/q$, with $q$ an integer, or in continuous time, for underlying processes that are consistent with the observations at the particular sampling period $h$. Amongst those models select the one that exposes a maximal number of dynamical relationships. 
This will be the model with the finest time-resolution, and it will be referred to as the {\em lifting\/} of the observation-time-scale model to the natural time-resolution. 

The outline of the paper is as follows. In Section \ref{sec:sec2} we explain how dynamical dependencies between the entries of a vector process are encoded in the spectral density and how they can be recovered via spectral factorization in the form of a transfer matrix that relates respective components. In Section \ref{sec:c2d} we provide the correspondence between continuous-time stochastic systems and their corresponding sampled discrete-time versions. In particular, Theorem \ref{prop:discretemodel} provides necessary and sufficient conditions for a discrete-time model to originate via sampling. Then, Section \ref{sec:sec5} details the correspondence between sampled models at different sampling rates. In each of these two sections we present an academic example that illustrates  key points.  In Section \ref{sec:sec4} we investigate general discrete-time linear stochastic model as to when they can be lifted to continuous-time ones. Finally, Section \ref{sec:conclusions} concludes with implications of the theory.

\section{Dynamic relations and spectral rank}\label{sec:sec2}

We consider finite-dimensional stationary Gauss-Markov real stochastic processes and are interested in identifying the maximal 
number of deterministic dynamical relations between the entries of such processes from observational data. Initially, in this section, we
explain {\em how such relations relate to the spectral density of the process at its natural time resolution}, whether this is in continuous-time as depicted in Figure \ref{fig:fig1} or discrete-time at some basic sampling rate.

To exemplify the task of identifying dynamical dependencies consider the setting of Figure \ref{fig:fig1}
and let
\begin{equation}\label{eq:z}
\zz(t):= \left(\begin{matrix}u(t)\\y(t)\end{matrix}\right)
\end{equation}
be a stationary process partitioned into two components.
Its stochastic nature originates in the ``environment'' that may include multiple feedback loops and noise sources. 
Assume that a deterministic dynamical relation between  $u$ and $y$, represented by the transfer function $T$, exists. This transfer function does not need to be stable -- only the combined (feedback) dynamics need to be. Moreover, causation is not essential in our discussion, although we will see shortly how the entries of $\zz$ can be partitioned into ``inputs'' and ``outputs'' so as to assign a consistent {\em proper} transfer function that models their dependence.

Continuing in the context of Figure \ref{fig:fig1}, the spectral density
\begin{align}\nonumber
\Phi_\zz(i\omega)&= \left(\begin{matrix} \Phi_u(i\omega) & \Phi_{uy}(i\omega)\\\Phi_{yu}(i\omega) &\Phi_y(i\omega)\end{matrix}\right),
\end{align}
 of the stochastic process $\zz$ factors as
\begin{align}\label{eq:outer}
\Phi_\zz(i\omega)&=\left(\begin{matrix} I\\ T(i\omega)\end{matrix}\right) \Phi_u(i\omega)\left(\begin{matrix} I,& T(-i\omega)'\end{matrix}\right).
\end{align}
Thus, assuming that $\Phi_u$ is  a.e.\ nonsingular, the number of deterministic dynamical relations between the entries of $\zz$ coincides with the nullity of $\Phi_\zz$. Indeed, 
the transfer function between $u$ and $y$ can be recovered from $\Phi_\zz$ since
\begin{equation}\label{eq:T}
T(s)=\Phi_{yu}(s)\Phi_u(s)^{-1},
\end{equation} 
and the deterministic dependence between the entries of $\zz$ can be expressed in ``kernel'' form as
$L(i\omega) \Phi_\zz(i\omega) =0$ with $L(s) := (-T(s),\; I)$.

We now explain how such dynamical relations can be readily obtained in state-space form via spectral factorization of $\Phi_\zz$.
Throughout, we let $\zz$ be a stationary $p$-vector stochastic process with the $p\times p$ rational spectral density $\Phi_\zz(i\omega)$ having
\[
\rank(\Phi_\zz(i\omega))=m, \mbox{ a.e., for }\omega\in\mathbb R.
\]
Thus,
$
m\leq p,
$
and when the inequality is strict, there are $p-m$ dynamical relations between the entries of $\zz$. For our purposes we assume throughout that $\Phi_\zz(\infty)=0$. This ensures that $\zz$ has continuous sample paths, which is needed later on when we define the sampled process $\zz_k:=\zz(kh)$. 
Denote by
\[
V_\zz(s)=H(sI-F)^{-1}G
\]
a minimal stable $p\times m$ spectral factor of $\Phi_\zz$  (see e.g., \cite[page 198]{LPbook}). That is,
\begin{equation}\label{eq:factorization}
\Phi_\zz(i\omega)=V_\zz(i\omega)V_\zz(-i\omega)^\prime
\end{equation}
with $H\in \mR^{p\times n}$, $F\in \mR^{n\times n}$, $G\in \mR^{n\times m}$, $(F,G)$ reachable, $(F,H)$ observable, and $F$ stable matrix, i.e., having all eigenvalues in the open left halfplane.
In particular,  $G$ has full column rank and 
\[
\rank(G)=m\leq n. 
\]
Moreover, we assume that $\rank(HG)=m$. 
Below we provide a factorization of $V_\zz$ that displays the transfer function $T$;
for notational convenience, rational proper functions with possibly a constant term will be displayed in the following standard matrix-notation
\[
C(sI-A)^{-1}B+D=:\left[\begin{array}{c|c}A&B\\\hline C&D\end{array}\right].
\]

\begin{thm}
With $V_\zz,F,G,H$ as above, re-order the rows of $H$ and partition
\[H=\left(\begin{matrix}H_0\\H_1\end{matrix}\right)
\]
so that $H_0G$ is $m\times m$ and invertible. Re-order in the same way the entries of $\zz$ and as in \eqref{eq:z} let $u$ represent the first $m$ entries and $y$ the remaining. Then,
\begin{equation}
V_\zz(s)= \left(\begin{matrix} I\\T(s)\end{matrix}\right)M(s)s^{-1}, \mbox{ where}
\end{equation}
\begin{subequations}
\begin{align}\nonumber\\[-.05in]
\label{eq:T(s)}
T(s)&=\left[\begin{array}{c|c}\Gamma &G (H_0G)^{-1}\\\hline H_1\Gamma &H_1G(H_0G)^{-1}\end{array}\right],\\
M(s)&=\left[\begin{array}{c|c}F&G\\\hline H_0F&H_0G\end{array}\right], \mbox{ and}\\
\Gamma&=F-G(H_0G)^{-1}H_0F.
\end{align}
\end{subequations}
\end{thm}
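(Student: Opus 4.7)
The strategy is to simplify the state-space shorthand for $M(s)$ and $T(s)$ via the elementary identity $s(sI-A)^{-1} = I + A(sI-A)^{-1}$, and then match the claimed factorization block-by-block against the partitioned form
\[
V_\zz(s) \;=\; H(sI-F)^{-1}G \;=\; \begin{pmatrix} H_0(sI-F)^{-1}G \\ H_1(sI-F)^{-1}G \end{pmatrix}.
\]

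First I would expand $M(s) = H_0F(sI-F)^{-1}G + H_0G = H_0\bigl[F(sI-F)^{-1} + I\bigr]G = H_0\cdot s(sI-F)^{-1}G$, so that $M(s)s^{-1} = H_0(sI-F)^{-1}G$. This already matches the top block of $V_\zz$. Applying the same manipulation to the state-space expression for $T$ yields $T(s) = H_1\cdot s(sI-\Gamma)^{-1}G(H_0G)^{-1}$. Consequently the bottom block of the factorization reduces to
\[
H_1(sI-F)^{-1}G \;=\; H_1\, s(sI-\Gamma)^{-1}G(H_0G)^{-1}H_0(sI-F)^{-1}G,
\]
for which it suffices to establish the $H_1$-free identity
\[
(sI-F)^{-1}G \;=\; s(sI-\Gamma)^{-1}\,G(H_0G)^{-1}H_0\,(sI-F)^{-1}G.
\]

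The pivotal observation is that $P:=G(H_0G)^{-1}H_0$ is a projection with $PG=G$, and that $\Gamma=(I-P)F$. Multiplying the last display on the left by $(sI-\Gamma)$ and using $s(sI-F)^{-1}G = G + F(sI-F)^{-1}G$, both sides reduce to $G + PF(sI-F)^{-1}G$; the cancellation of the ``extra'' constant term on the right is precisely the consequence of $PG=G$. Since $(sI-\Gamma)$ is nonsingular for generic $s$, the identity follows by analytic continuation.

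The only conceptually substantive step is recognizing the projection structure and the role of $PG=G$; the rest is bookkeeping with resolvents, and the main risk lies in keeping the algebra straight. As a sanity check, the same $T(s)$ can be recovered via a stochastic realization argument: writing the driving equation $d\xi = F\xi\,dt + G\,dw$ for $\zz=H\xi$, the invertibility of $H_0G$ permits solving $du = H_0F\xi\,dt + H_0G\,dw$ for $dw$ and substituting back, yielding the reduced state equation $d\xi = \Gamma\xi\,dt + G(H_0G)^{-1}\,du$. Laplace transforming and reading off $y=H_1\xi$ then produces exactly $y = H_1\, s(sI-\Gamma)^{-1}G(H_0G)^{-1}u = T(s)u$, confirming the stated formulas.
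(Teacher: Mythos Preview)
Your argument is correct. The overall architecture matches the paper's: both first use $s(sI-F)^{-1}=I+F(sI-F)^{-1}$ to identify the top block of $V_\zz$ with $M(s)s^{-1}$ and then reduce the theorem to verifying $T(s)M(s)=N(s)$ with $N(s)=H_1G+H_1F(sI-F)^{-1}G$. Where you diverge is in the verification of that identity. The paper instructs one to form the cascade realization of $T(s)M(s)$ (a $2n$-state system with block-triangular state matrix having diagonal blocks $\Gamma$ and $F$) and then observe that the $\Gamma$-block is uncontrollable/unobservable and cancels, leaving $N(s)$. You instead isolate the projection $P=G(H_0G)^{-1}H_0$, note $PG=G$ and $\Gamma=(I-P)F$, and check the resolvent identity $(sI-F)^{-1}G=s(sI-\Gamma)^{-1}P(sI-F)^{-1}G$ by clearing $(sI-\Gamma)$. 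Your route is more elementary (no cascade realization or mode-cancellation bookkeeping) and in fact proves a slightly stronger statement, since you establish the identity before applying $H_1$; the paper's route has the advantage of being the mechanical ``just multiply the state-space blocks'' computation. The stochastic-realization sanity check you append is a nice independent confirmation not present in the paper.
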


\begin{proof}
Partition $V_\zz$ conformably with $H$, to write
\begin{align*}
V_\zz(s)
&=\left(\begin{matrix}s^{-1}(H_0G+H_0F(sI-F)^{-1}G)\\
s^{-1}(H_1G+H_1F(sI-F)^{-1}G)\end{matrix}\right)\\
&=:\left(\begin{matrix}M(s)\\N(s)\end{matrix}\right)s^{-1}.
\end{align*}
The only thing that needs to be shown is the claimed expression in \eqref{eq:T(s)}. One can verify directly that
$T(s)M(s)$
coincides with $N(s)=H_1G+H_1F(sI-F)^{-1}G$. To carry out the computations and verify, one needs to write the product in state-space form and cancel the modes corresponding to the system matrix $\Gamma$.
\end{proof}

A corresponding result can be easily worked out in discrete time where the spectral factor may also have a constant term.

\section{Continuous-time to discrete-time and back}\label{sec:c2d}

Once again we consider $\zz$ to be a stationary $p$-vector process with $p\times p$ rational spectral density $\Phi_\zz(i\omega)$ of constant rank $m$ a.e.\ on $i\mathbb R$,
and $F,G,H$ as specified in the previous section.
A minimal Markovian representation  \cite{LPbook} of $\zz$ is
\begin{subequations}\label{eq:continuous}
\begin{align}
\label{eq:continuous_model}
dx(t)&=Fx(t)dt + Gdw(t)\\
\label{eq:dx}
\zz(t)&=Hx(t)
\end{align}
\end{subequations}
with $w(t)$ a standard vectorial Wiener process of compatible dimension.
The sampled paths of $x(t)$ are also continuous a.s.\ and
the corresponding sampled process $x_k=x(kh)$
satisfies
\[
x_{k+1}=e^{Fh}x_k+\int_{kh}^{(k+1)h}e^{F((k+1)h-\tau)}Gdw(\tau).
\]
It follows that $x_k$ and $\zz_k=\zz(kh)$ satisfy the discrete-time stochastic system of equations
\begin{subequations}\label{eq:discrete}
\begin{align}\label{eq:xk}
x_{k+1}&= A x_k +Bv_k\\\label{eq:zk}
\zz_k&=Cx_k
\end{align}
\end{subequations}
with,
\begin{subequations}\label{eq:parameters}
\begin{align}
A&=e^{Fh} \label{A}\\
C&=H\label{C}\\
Bv_k&=\int_{kh}^{(k+1)h}e^{F((k+1)h-\tau)}Gdw(\tau),
\end{align}
and $v_k$ a sequence of independent random vectors having zero mean and unit variance (i.e., normalized white noise). 
Then
\begin{align*}
BB'&=\int_0^h e^{F\tau}GG'e^{F'\tau}d\tau =:Q.
\end{align*}

Since $(F,G)$ is a controllable pair, $Q=BB'$ is nonsingular
and, in particular, we can take
\begin{align}\label{eq:BQhalf}
B&=Q^{1/2}.
\end{align}
\end{subequations}
At the same time,
\begin{align}\nonumber
BB'\hspace*{-2pt}=\hspace*{-2pt}&\int_0^\infty \hspace*{-6pt}e^{F\tau}GG'e^{F'\tau}d\tau- \underbrace{e^{Fh}}_{A}
\left(\int_0^\infty  \hspace*{-6pt}e^{F\tau}GG'e^{F'\tau}d\tau \hspace*{-2pt}\right) \underbrace{e^{F'h}}_{A'},\nonumber
\end{align}
and therefore,
\[
P:=\int_0^\infty e^{F\tau}GG'e^{F'\tau}d\tau,
\]
satisfies simultaneously two Lyapunov equations
\begin{subequations}
\begin{align}
\label{eq:discreteLyapunov}
P&=APA'+BB', \mbox{ and}
\end{align}
\begin{align}\label{eq:contLyap}
FP+PF'+GG'&=0,
\end{align}
\end{subequations}
i.e., both a ``discrete-time'' and a ``continuous-time'' Lyapunov equation.

We distill the bijection between the continuous-time model and its sampled counterpart in the following theorem.\\[-0.05in]

\begin{thm}\label{prop:discretemodel}
Consider the continuous-time stochastic model \eqref{eq:continuous} having parameters $(F,G,H)$ with $(F,G)$ controllable. 
Sampling with period $h$ gives rise to the discrete-time stochastic model \eqref{eq:discrete} with parameters $(A,B,C)$ satisfying (\ref{eq:parameters}). Conversely, if the parameters $(A,B,C)$ of 
stochastic model \eqref{eq:discrete} (with $A$ a stability matrix) are such that
\begin{subequations}\label{conditions}
\begin{align}
\mbox{i) }&A \mbox{ admits a (principal) matrix logarithm \cite{culver1966existence}}
\label{eq:logm}\\
& \mbox{that we denote }\log(A),\nonumber\\
\mbox{ii) }&\det (BB')\neq 0, \label{Bsqinv}\\
\label{eq:key}
\mbox{iii) }&\log(A)P+P\log(A)'\leq 0,\mbox{ for }P\\
&\mbox{ the solution of } P=APA'+BB',\nonumber
\end{align}
\end{subequations}
then \eqref{eq:discrete} arises by sampling \eqref{eq:continuous} with $F=\frac{1}{h}\log(A)$, $H=C$,
and $G$ is a left factor of $-(FP+PF')$ of full column rank.
\end{thm}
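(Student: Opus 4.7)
The plan is to treat the two directions separately. The forward direction is essentially the derivation already carried out in the text preceding the theorem statement: integrating \eqref{eq:continuous_model} over each sampling interval $[kh,(k+1)h]$ produces the recursion \eqref{eq:discrete} with $A=e^{Fh}$, $C=H$, and $Bv_k$ the stochastic integral of covariance $Q$; controllability of $(F,G)$ forces $Q\succ 0$, so one can normalize to white noise $v_k$ and take $B=Q^{1/2}$.

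For the converse, I would build $(F,G,H)$ from $(A,B,C)$ satisfying (i)--(iii) and then verify that sampling the resulting continuous-time model reproduces $(A,B,C)$. First, set $F:=\frac{1}{h}\log(A)$ using condition (i), and $H:=C$. To obtain $G$, let $P$ be the unique positive definite solution of the discrete Lyapunov equation $P=APA'+BB'$; this exists because $A$ is a stability matrix and $BB'$ is nonsingular by (ii). By condition (iii) the symmetric matrix $R:=-(FP+PF')$ is positive semidefinite, hence admits a real full-column-rank factorization $R=GG'$; take this $G$.

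The crux is to show that, for this choice of $G$, the discrete-time noise variance $BB'$ matches the sampled Gramian $Q=\int_0^h e^{F\tau}GG'e^{F'\tau}d\tau$. I would establish this via the standard one-line identity
\[
\frac{d}{d\tau}\bigl(e^{F\tau}Pe^{F'\tau}\bigr)=e^{F\tau}(FP+PF')e^{F'\tau}=-e^{F\tau}GG'e^{F'\tau},
\]
which, integrated from $0$ to $h$ and combined with $A=e^{Fh}$, yields $APA'-P=-Q$, i.e.\ $Q=P-APA'$. Combining this with $P=APA'+BB'$ gives $BB'=Q$---precisely the identity that a sampled model must satisfy. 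Controllability of $(F,G)$ then follows from $Q=BB'$ being nonsingular by (ii).

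The main obstacle is not a hard calculation but the bookkeeping that exposes the distinct role of each of the three hypotheses: (i) is used only to define $F$; (ii) simultaneously ensures existence of a positive definite $P$ and, after the key identity, controllability of $(F,G)$; and (iii) is precisely the reality/positivity condition that allows $G$ to be a real matrix. Once the one-line Lyapunov identity is in place, the remaining verification that sampling $(F,G,H)$ with period $h$ returns $(A,B,C)$ reduces to direct substitution.
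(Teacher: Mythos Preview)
Your proposal is correct and follows essentially the same approach as the paper: the forward direction defers to the pre-theorem derivation, and for the converse you define $F=\frac{1}{h}\log(A)$, $H=C$, and $G$ as a full-rank factor of $-(FP+PF')$ (existence via (iii)), then verify that sampling reproduces $(A,B,C)$. The only difference is that the paper leaves the last step as ``it can be readily verified that conditions \eqref{eq:parameters} hold,'' whereas you actually carry out that verification via the derivative identity $\frac{d}{d\tau}\bigl(e^{F\tau}Pe^{F'\tau}\bigr)=-e^{F\tau}GG'e^{F'\tau}$, which is precisely the computation underlying the paper's earlier observation that $P$ satisfies both Lyapunov equations.
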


\begin{proof} The first claim, to the effect that sampling leads to (\ref{eq:parameters}), has already been shown. For the second part, given any sampling period $h$, let $F=\frac{1}{h}\log(A)$, $H=C$, and
$G$ such that
\begin{equation}\label{eq:LyapforGG}
GG'=-(FP+PF'),
\end{equation}
it can be readily verified that conditions \eqref{eq:parameters} hold.
That \eqref{eq:LyapforGG} can be solved for a suitable $G$ follows from \eqref{eq:key}. 
\end{proof}

The transfer function corresponding to \eqref{eq:discrete} is
\[
W(z)=C(zI-A)^{-1}B,
\]
and the spectral density of the sampled process $\zz_k$ is
\begin{align}
\Psi(z)&= W(z)W(z^{-1})'\nonumber\\
&=\label{eq:Psi}
C(zI-A)^{-1}Q(z^{-1}I-A')^{-1}C'.
\end{align}
Therefore, since $Q>0$ and $\rank ((zI-A)^{-1})=n$, we have that $\rank (\Psi(z)) =\rank(C)$ a.e. On the other hand, $m=\rank(\Phi_\zz(s))\leq \rank(H)$. Hence, since $\rank(C)=\rank(H)\geq m$, 
\[
p\geq\rank(\Psi(z))\geq \rank(\Phi_\zz(s))= m.
\] 
Thus the rank of $\Psi$ does not immediately reveal the $p-m$ linear dependencies. However $m$ can be computed as the rank of $\log(A)P+P\log(A)'$ in the second part of Theorem \ref{prop:discretemodel}.


Finally, 
since $A$ is invertible (a necessity, since $A$ admits a principal matrix logarithm \cite{culver1966existence}), $W(0)=-CA^{-1}B$ is finite, and since $W(\infty)=0$,
\begin{equation}\label{eq:Psi0}
\Psi(0)=0.
\end{equation}
These two conditions, $A$ admitting a matrix logarithm and \eqref{eq:Psi0}, characterize spectral densities of sampled processes.

\begin{example}\label{ex:example2}
{\em   We now present an illustrative academic example.
To this end we consider the continuous-time model \eqref{eq:continuous} with 
\begin{equation}\label{eq:FFtotal}
F=\left(\begin{matrix}\phantom{-}0&\phantom{-}1&\phantom{-}0&\phantom{-}0\\
\phantom{-}0&\phantom{-}0&\phantom{-}1&\phantom{-}0\\
\phantom{-}0&\phantom{-}0&\phantom{-}0&\phantom{-}1\\
-2&-5&-6&-4\end{matrix}\right), G=\left(\begin{matrix}1&-1\\1&-1\\1&-1\\1&\phantom{-}1\end{matrix}\right), \\\nonumber
\end{equation}
\begin{equation}\label{eq:HH}
H=\left(\begin{matrix}1&2&3&4\\
2&1&2&3\\
3&2&1&2\\
\vdots&\ddots&\ddots&\\
9&8&7&6\\
10&9&8&7
\end{matrix}\right).
%
\end{equation}
The rank of the corresponding spectral density is $2$, in agreement with the rank of $G$.
Next, we consider stochastic models $(A,B,C)$ for the sampled processes with sample periods $h\in \{1, 2^{-1}, 2^{-2},\ldots, 2^{-7}\}$. Their spectral density has rank $4$, and so does $BB'$. Thus, in 
Figure \ref{fig:fig2} we display the four eigenvalues of $BB'$  in logarithmic scale as a function of $h$.
\begin{figure}\begin{center}
\hspace*{15pt}
    \includegraphics[width=0.38\textwidth]{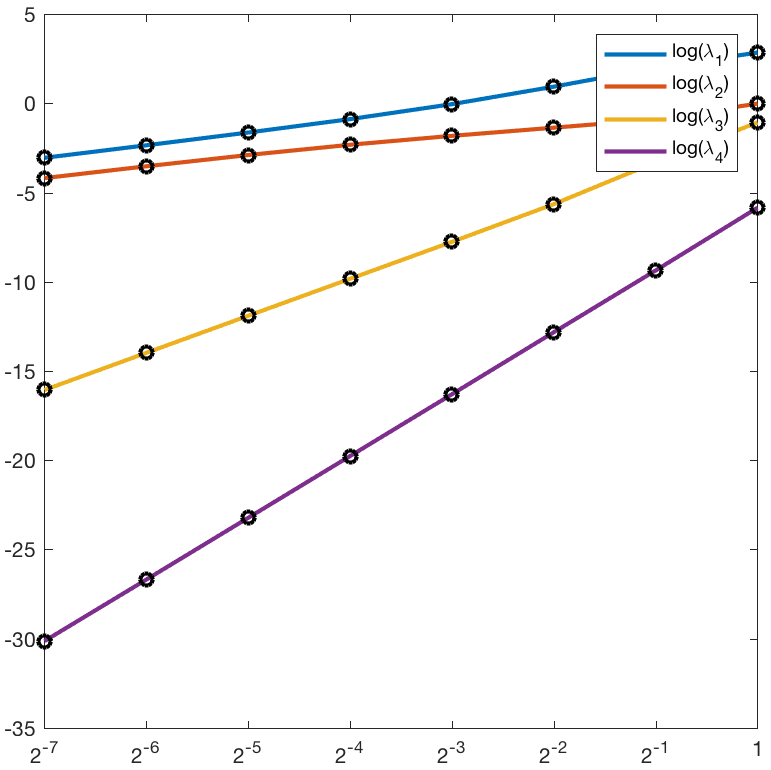}
    \caption{$\log({\rm eig}(P-e^{Fh}Pe^{F'h}))$ vs.\ $h$}
    \label{fig:fig2}
\end{center}
\end{figure}
It is seen that only two eigenvalues remain dominant\footnote{Their ratio tends to the ratio of the non-zero eigenvalues of $GG'$ since $GG'=\lim_{h\to 0}\frac{1}{h}BB'$.} as $h\to 0$. This is in agreement with the spectral density rank being $2$ in the limit.
Dynamical relations between the entries of $\zz$ can be obtained from the limiting continuous-time stochastic model with $(F,G,H)$ as in the second part of Theorem \ref{prop:discretemodel}.}
\end{example}

\section{Subsampling in discrete-time and back}\label{sec:sec5}

In analogy with the correspondence between continuous and
discrete-time stochastic models, we now consider a discrete-time process with the following representation
\begin{subequations}\label{eq:discrete_discrete}
\begin{align}
x_{k+1}&=Fx_k+Gw_k\\
\zz_k&=Hx_k +Jw_k
\end{align}
\end{subequations}
with $w_k$ a discrete-time white noise process and $H\in \mR^{p\times n}$, $F\in \mR^{n\times n}$, $G\in \mR^{n\times m}$, $m\leq p$ as before, with $(F,G)$ controllable and $(F,H)$ observable pairs. Once again
\begin{equation}\label{eq:factorization2}
\Phi_\zz(e^{i\theta})=V_\zz(e^{i\theta})V_\zz(e^{-i\theta})^\prime,
\end{equation}
is the spectral density, with
\[
V_\zz(z)=H(zI-F)^{-1}G+J
\]
a stable spectral factor of $\Phi_\zz$. We use notation $(F,G,H)$, $\Phi$, etc., for this model to reflect the analogy to continuous-time in Section~\ref{sec:c2d}, as it represents the underlying dynamics at the finest time-resolution. For any given positive integer $q$, we consider the subsampled process
\[
\hat \zz_\ell:=\zz_{\ell q}, \mbox{ for }\ell\in\{0,1,2,\ldots\}.
\] 
This admits the stochastic model
\begin{subequations}\label{eq:discrete_discretess}
\begin{align}
\xi_{\ell+1}&=A\xi_\ell+Bv_\ell\\
\hat\zz_\ell&=C\xi_\ell +Dv_\ell
\end{align}
\end{subequations}
where $\xi_\ell := x_{q\ell}$ and 
\begin{subequations}\label{eq:parametersss}
\begin{align}
A&=F^q,\\
B&=\left[\begin{matrix}G,&FG&\ldots,F^{q-1}G\end{matrix}\right],\label{Bhat}\\
C&=H, \mbox{ and}\\
D&=\left[\begin{matrix}0,&0&\ldots,J\end{matrix}\right] \label{Dhat}
\end{align}
\end{subequations}
with
\[
v_\ell=\left[\begin{matrix}w'_{\ell q+q-1},&w'_{\ell q+q-2}&\ldots, &w'_{\ell q}\end{matrix}\right]'.
\]
We note that 
$BB'=\sum_{k=0}^{q-1}F^kGG'(F^k)'
$
gains in rank as $q$ increases and will attain maximal rank for $q\geq n$, and so the spectral density 
\begin{displaymath}
\Psi(z)=W(z)W(z^{-1})' ,
\end{displaymath}
(where $W(z)$ will be specified in \eqref{eq:W}) 
will have rank that is larger than the rank $m$ of $\Phi_\zz(z)$. 
Consequently, as in Section~\ref{sec:c2d}, nullity of $\Phi_\zz(z)$ that is due to dynamic relations between the entries of $\zz$ will no longer be reflected in
$\Psi(z)$.

The analogous statement to Theorem \ref{prop:discretemodel} is the following.

\begin{thm}\label{prop:discretemodelss}
Consider the discrete-time stochastic model \eqref{eq:discrete_discrete} with parameters $(F,G,H,J)$ and $(F,G)$ controllable. 
Then, with $q$ a positive integer, subsampling $\zz_k$ every $q$ steps
gives rise to $\hat\zz_\ell$ in \eqref{eq:discrete_discretess} with parameters $(A,B,C,D)$ satisfying \eqref{eq:parametersss}. Conversely, given a model \eqref{eq:discrete_discretess} with parameters $(A,B,C,D)$, it is 
 a model of a $q$-subsampled discrete-time process $\zeta_k$ provided
\begin{subequations}\label{conditionsdiscr}
\begin{align}
\mbox{i) }&A \mbox{ admits a $q$th root that we denote } A^{1/q}, \mbox{ and}\label{eq:qthroot}\\
\mbox{ii) }& M(P)=\begin{bmatrix}P-A^{1/q}P(A^{1/q})'&A^{1/q-1}BD'\\DB'(A^{1/q-1})'&DD' \end{bmatrix}\geq 0, \nonumber \\
&\mbox{for }P\mbox{ the solution of } P=APA'+BB'. \label{eq:Lambda}
\end{align}
Parameters of a model  \eqref{eq:discrete_discrete} for the corresponding primary process $\zeta_k$ are
$F=A^{1/q}$, $H=C$, and {\footnotesize $\begin{bmatrix}G\\J\end{bmatrix}$} a left factor of $M(P)$ of full column rank, i.e., so that
\begin{equation}\label{eq:GJfactorization}
\begin{bmatrix}G\\J\end{bmatrix}\begin{bmatrix}G'&J'\end{bmatrix} =M(P).
\end{equation}
If $D=0$, \eqref{eq:Lambda} in condition ii) reduces to 
\begin{equation}\label{cond(ii)D=0}
\mbox{ii)$^\prime$}\quad P-A^{1/q}P(A^{1/q})'\geq 0.
\end{equation}
\end{subequations}
\end{thm}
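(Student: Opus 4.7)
The plan is to handle the two directions separately: the forward (subsampling) direction by direct iteration of the fine-scale recursion, and the converse by constructing a primary model from a positive semidefinite factorization of $M(P)$ and then matching second-order statistics block by block.

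For the forward direction, setting $\xi_\ell := x_{q\ell}$ and iterating \eqref{eq:discrete_discrete} exactly $q$ times yields
\[
\xi_{\ell+1}\;=\;F^q\xi_\ell \;+\; \sum_{j=0}^{q-1} F^{q-1-j}G\,w_{q\ell+j}.
\]
Stacking $w_{q\ell},\ldots,w_{q\ell+q-1}$ into $v_\ell$ in the order specified below \eqref{eq:parametersss} exhibits $A=F^q$ and $B=[G,FG,\ldots,F^{q-1}G]$; the identity $\hat\zz_\ell = Hx_{q\ell}+Jw_{q\ell}$ then reads off $C=H$ and $D=[0,\ldots,0,J]$. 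This establishes \eqref{eq:parametersss}.

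For the converse I would set $F:=A^{1/q}$ and $H:=C$ directly. Since the eigenvalues of $F$ are $q$-th roots of those of $A$, stability of $A$ implies stability of $F$, and hence the discrete Lyapunov equation $P=APA'+BB'$ admits a unique positive semidefinite solution. Condition~(ii) gives $M(P)\geq 0$, so it has a minimum-rank left factor \eqref{eq:GJfactorization}, which defines $(G,J)$. The claim then is that the primary model \eqref{eq:discrete_discrete} with parameters $(F,G,H,J)$, subsampled every $q$ steps, reproduces the second-order statistics of $(A,B,C,D)$. This I would verify by inspecting \eqref{eq:GJfactorization} block-by-block. The $(1,1)$-block gives $GG'=P-FPF'$, so $P$ satisfies the fine-scale Lyapunov equation $\Pi = F\Pi F'+GG'$ and therefore coincides with the state covariance $\Pi$ of the primary model by uniqueness for stable $F$; telescoping this identity $q$ times yields $P-APA'=\sum_{k=0}^{q-1}F^kGG'(F^k)'$, whose left side is $BB'$, so the subsampled state-noise covariance matches exactly. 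The $(1,2)$-block gives $BD'=F^{q-1}GJ'$, which is precisely the cross-covariance between the state noise and output noise of the subsampled primary model under the structures $[G,\ldots,F^{q-1}G]$ and $[0,\ldots,J]$. The $(2,2)$-block gives $DD'=JJ'$. Taken together, the three identities deliver equality of spectral densities. The degenerate case \eqref{cond(ii)D=0} is then immediate: if $D=0$ the off-diagonal and lower-right blocks of $M(P)$ vanish, positive semidefiniteness reduces to $P-FPF'\geq 0$, and the factorization forces $J=0$.

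The main obstacle I anticipate is the apparent circularity in the converse: $G$ is defined via a factorization of $M(P)$, which itself is built from $P$, and yet I must show that the state covariance of the resulting primary model equals $P$. The single nontrivial observation that breaks the circle is that the $(1,1)$-block already encodes $GG'=P-FPF'$, so $P$ automatically solves the fine-scale Lyapunov equation and coincides with $\Pi$ by uniqueness for stable $F$. A minor technical point is that writing $A^{1/q-1}$ in \eqref{eq:Lambda} implicitly requires $A$ to be invertible in the case $D\neq 0$; this does not affect \eqref{cond(ii)D=0} and is consistent with the invertibility framework implicit in conditions~(i)--(ii).
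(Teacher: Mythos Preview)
Your proposal is correct and follows essentially the same route as the paper: the forward direction is the same iteration argument, and for the converse the paper likewise sets $F=A^{1/q}$, factors $M(P)$ to obtain $(G,J)$, and then verifies the three block identities $\hat B\hat B'=BB'$, $\hat B\hat D'=BD'$, $\hat D\hat D'=DD'$ to conclude that the subsampled primary model has the same second-order statistics. You are in fact a bit more explicit than the paper about why $P$ coincides with the fine-scale state covariance (via the $(1,1)$-block identity $GG'=P-FPF'$ and Lyapunov uniqueness for stable $F$), a step the paper's proof uses but leaves implicit when writing $\hat B\hat B'=P-F^qP(F^q)'$.
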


\begin{proof}
The first direction in the theorem has already been shown. It only remains to prove the converse direction. By setting $F=A^{1/q}$ and selecting $G$ and $J$ to satisfy \eqref{eq:GJfactorization}, we see that
$(A,\hat B,C,\hat D)$, with $\hat B$ and $\hat D$ given by \eqref{Bhat} and \eqref{Dhat}, respectively, is a model of a $q$-subsampled discrete process. Now, it remains to show that 
\begin{displaymath}
\begin{bmatrix}\hat{B}\\\hat{D}\end{bmatrix}\begin{bmatrix}\hat{B}'&\hat{D}'\end{bmatrix}
=\begin{bmatrix}B\\D\end{bmatrix}\begin{bmatrix}B'&D'\end{bmatrix}.
\end{displaymath}
To this end observe that 
\[
\hat B\hat B'= P-F^qP(F^q)'=P-APA',
\]
which equals $BB'$ by  condition {\em ii)}. Moreover, $\hat{D}\hat{D}'=JJ'$, which equals $DD'$ by \eqref{eq:GJfactorization}. Finally, by \eqref{eq:GJfactorization} and \eqref{eq:Lambda},
\begin{displaymath}
\hat{B}\hat{D}'= F^{q-1}GJ' =A^{1-1/q}GJ' =BD'.
\end{displaymath}
Hence, $q$-subsampling of the output process of \eqref{eq:discrete_discrete} gives rise to a process with the same statistics as that in \eqref{eq:discrete_discretess}.
\end{proof}


\begin{example}\label{ex:example5}
{\em We now present an academic example where the fastest time-resolution is in discrete time.
We begin with a discrete-time model having spectral density of rank two and no consistent model running with a faster clock being possible. To this end we take
\begin{align*}
x_{k+1}&=\hat F x_k+Gu_k\\
\zz_k&=Hx_k
\end{align*}
with zero constant term $J$, for simplicity, and with $\hat F=e^{Fh}$, $h=0.5$ [time units] and 
$F,G,H$ as in \eqref{eq:FFtotal}.
Assume that data is available at intervals of $h=2.5$ [time units] and, therefore, provide us with the discrete-time model \eqref{eq:discrete_discretess} with parameters
$A=\hat F^5$, $B=[G,\hat F G, \ldots \hat F^4 G]$ and $C=H$.
Selecting $q\in\{2,3,4,5,6,\ldots\}$ and applying the converse direction in Theorem \ref{prop:discretemodelss} gives us models with a faster time rate, as long as $P-A^{1/q}P(A^{1/q})'\geq 0$, with $P$ the solution of $P-A P A '=BB'$. Naturally, for $q=5$ we recover the parameters 
$(\hat F,G,H)$. It can be verified that 
\begin{equation*}
\label{eq:stopping}
P-A^{1/q}P(A^{1/q})'\not\geq 0 \mbox{ for }q>5.
\end{equation*}
Therefore no model with rate faster than $5$ times the observational sampling rate is possible.
 \begin{figure}\begin{center}
 \hspace*{1pt}
   \includegraphics[width=0.38\textwidth]{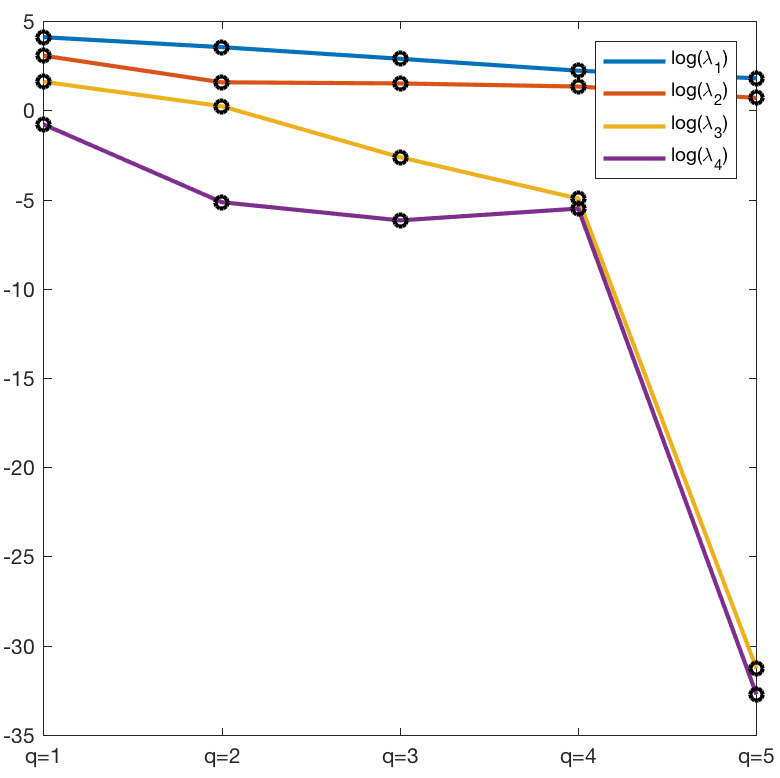}
    \caption{$\log({\rm eig}(P-A^{1/q}P(A')^{1/q}))$ vs.\ $q$}
    \label{fig:fig3}
\end{center}
\end{figure}
As seen in Figure~\ref{fig:fig3}, the lowest two eigenvalues of $P-A^{1/q}P(A^{1/q})'$ separate substantially from the top two and vanish for $q=5$, leading to  $\rank(P-A^{1/q}P(A^{1/q})')=2$. Once again, the parameters $(\hat F,G,H)$ can be obtained from $(A,B,C)$ as in Theorem \ref{prop:discretemodelss}.
}
\end{example}

\section{When is lifting to continuous-time possible?}\label{sec:sec4}

Discrete-time (purely nondeterministic) stationary stochastic models are usually given in the form
\begin{subequations}\label{eq:full_discrete_model}
\begin{align}
\xi_{k+1}&= A \xi_k +Bv_k\\\label{eq:xik}
\zz_k&=C\xi_k + Dv_k 
\end{align}
\end{subequations}
and, accordingly,
\begin{equation}
\label{eq:W}
W(z)=C(zI-A)^{-1}B+D,
\end{equation} 
with a {\em nonzero constant term} $D$.
As we already saw in Example~\ref{ex:example5}, lifting to a continuous-time system may not always be possible, even when $D=0$.
Therefore, we now seek conditions on the parameters of \eqref{eq:full_discrete_model} under which
the output process $\zz_k$ arises via sampling the output of a continuous-time model \eqref{eq:continuous}.
For later reference we define $\bar{C} := CPA'+DB'$ and $\Lambda_0:=CPC'+DD'$ with $P$ satisfying \eqref{eq:discreteLyapunov}. 

Clearly condition \eqref{eq:Psi0} is necessary for
$\zz_k$ to be the sampled output process of \eqref{eq:continuous}. 
So we must have 
\begin{equation}
\label{eq:Psi(0)=0}
\Psi(0)=W(0)W(\infty)'=(D-CA^{-1}B)D'=0.
\end{equation}
There are numerous systems \eqref{eq:full_discrete_model} with the same output $\zz_k$, all having the same $A$ and $C$, but possibly different matrices $B$ and  $D$ \cite[Chapter 6]{LPbook}, each corresponding to a different spectral factor \eqref{eq:W}. Without loss of generality, we choose \eqref{eq:full_discrete_model} to correspond to a minimum phase spectral factor and we let $(B_-,D_-)$ be the corresponding pair of matrices. Then, if the $p\times p$ spectral density $\Psi(e^{i\theta})$ has rank $r$ for all $\theta$, $D_-$ is a $p\times r$ matrix of full column rank  \cite[Theorem 6.1]{LPbook}, and therefore $R:=D_-'D_- >0$. However, \eqref{eq:Psi(0)=0} implies that $(D_--CA^{-1}B_-)R=0$, and therefore
\begin{equation}
\label{eq:Dhat=0}
D_--CA^{-1}B_-=0.
\end{equation}
Next, letting $x_k:=A^{-1}\xi_{k+1}$, 
\begin{displaymath}
x_{k+1}=\xi_{k+1}+A^{-1}B_-v_{k+1}= Ax_k +A^{-1}B_-v_{k+1},
\end{displaymath}\hspace*{-2pt}
and, since $\xi_k\hspace*{-2pt}=\hspace*{-2pt}A^{-1}\xi_{k+1} \hspace*{-2pt}-\hspace*{-2pt}A^{-1}B_-v_k\hspace*{-2pt}=\hspace*{-2pt}x_k \hspace*{-2pt}-\hspace*{-2pt}A^{-1}B_-v_k$, we have $\zeta_k=Cx_k +(D_--CA^{-1}B_-)v_k.$
Consequently, by \eqref{eq:Dhat=0}, 
\begin{align*}
x_{k+1}&= A x_k +A^{-1}B_-v_{k+1}\\
\zz_k&=Cx_k
\end{align*}
which has the form \eqref{eq:discrete} of a sampled system. Now let 
\begin{displaymath}
\hat{P}=A\hat{P}A' +A^{-1}B_-B_-'(A')^{-1}
\end{displaymath}
be the corresponding Lyapunov equation. Then $P_-:=A\hat{P}A'$ satisfies $P_- = AP_-A' +B_-B_-'$, and hence 
\begin{equation}
\label{eq:Pminus}
P_-=\lim_{k\to\infty} \Pi_k
\end{equation}
\cite[Section 6.9]{LPbook}, where $\Pi_{k+1}=\Lambda(\Pi_k)$,  $\Pi_0=0$, with 
\begin{displaymath}
\Lambda(\Pi)=\Pi-A\Pi A' +(\bar{C}-C\Pi A')'\Delta(\Pi)^{-1}(\bar{C}-C\Pi A')
\end{displaymath}
and $\Delta(\Pi):=\Lambda_0 -C\Pi C'$.
We summarize these conclusions in the following theorem.\\[-.05in]

\begin{thm}\label{thm:generaldiscretemodel}
The output process $\zz_k$ of the stochastic system \eqref{eq:full_discrete_model} is the sampled process of the output of a continuous-time system \eqref{eq:continuous} with parameters $(F,G,H)$ provided $\Psi(0)=0$ and conditions \eqref{eq:logm}, \eqref{Bsqinv} and $\log(A)\hat{P}+\hat{P}\log(A)'\leq 0$ hold, where $\hat{P}=A^{-1}P_-(A')^{-1}$ with $P_-$ given by \eqref{eq:Pminus}. Then $\zz_k$ arises by sampling \eqref{eq:continuous} with $F=\frac{1}{h}\log(A)$, $H=C$, and $G$ being a left factor of $-(F\hat{P}+\hat{P}F')$ of full column rank.
\end{thm}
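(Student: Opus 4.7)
The plan is to reduce the general case to the setting already handled by Theorem~\ref{prop:discretemodel}. Most of the groundwork is already in the lead-up to the statement: one passes to the minimum-phase spectral factor $(B_-,D_-)$, uses $\Psi(0)=0$ together with $R=D_-'D_->0$ to deduce the identity $D_- - CA^{-1}B_- = 0$, and then introduces the shifted state $x_k := A^{-1}\xi_{k+1}$, which rewrites the model in the form \eqref{eq:discrete} with input matrix $\tilde{B} := A^{-1}B_-$ and \emph{no} feedthrough term. The remaining task is simply to invoke Theorem~\ref{prop:discretemodel} on the transformed triple $(A,\tilde{B},C)$ and pull back the resulting continuous-time parameters.

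First, I would check that the Lyapunov equation for this transformed model, namely $\hat{P} = A\hat{P}A' + \tilde{B}\tilde{B}' = A\hat{P}A' + A^{-1}B_-B_-'(A')^{-1}$, defines the very same $\hat{P}$ as in the statement. Indeed, multiplying on the left by $A$ and on the right by $A'$ shows that $P_- := A\hat{P}A'$ solves $P_- = AP_-A' + B_-B_-'$, matching the equation characterizing the limit \eqref{eq:Pminus}; hence $\hat{P} = A^{-1}P_-(A')^{-1}$, as claimed. Next, I would verify the three hypotheses of Theorem~\ref{prop:discretemodel} applied to $(A,\tilde{B},C)$: hypothesis i) is exactly \eqref{eq:logm}; hypothesis ii) reads $\det(\tilde{B}\tilde{B}') = \det\bigl(A^{-1}B_-B_-'(A')^{-1}\bigr) \neq 0$, which is the appropriate form of \eqref{Bsqinv} for the transformed input and is equivalent to $\det(B_-B_-')\neq 0$ since $A$ is invertible; hypothesis iii) is the inequality $\log(A)\hat{P}+\hat{P}\log(A)'\leq 0$, stated verbatim. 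Theorem~\ref{prop:discretemodel} then produces a continuous-time model \eqref{eq:continuous} with $F = \tfrac{1}{h}\log(A)$, $H = C$, and $G$ a full-column-rank left factor of $-(F\hat{P}+\hat{P}F')$, whose sampled output process coincides with $\zz_k$ in distribution.

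The main obstacle is justifying the passage to the minimum-phase factor and tracking the rank conditions through the coordinate change: the original $(B,D)$ in \eqref{eq:full_discrete_model} need not be minimum phase, but replacing them with $(B_-,D_-)$ yields an output with the identical spectral density and hence the same stationary law, which is what one needs in order to interpret $\hat{P}$ as $A^{-1}P_-(A')^{-1}$ and to obtain the crucial identity $D_- - CA^{-1}B_- = 0$. Once that identity is in hand and the coordinate change $x_k = A^{-1}\xi_{k+1}$ has eliminated the feedthrough, the remainder of the argument is essentially bookkeeping that invokes Theorem~\ref{prop:discretemodel} verbatim.
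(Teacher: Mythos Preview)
Your proposal is correct and follows essentially the same approach as the paper: the derivation preceding the theorem statement already carries out the passage to the minimum-phase pair $(B_-,D_-)$, the identity $D_--CA^{-1}B_-=0$ from $\Psi(0)=0$, the state shift $x_k=A^{-1}\xi_{k+1}$ eliminating the feedthrough, and the identification $\hat{P}=A^{-1}P_-(A')^{-1}$, after which the theorem is stated as a summary with no separate proof. Your only addition is to make explicit the verification of hypotheses i)--iii) of Theorem~\ref{prop:discretemodel} for the transformed triple $(A,A^{-1}B_-,C)$, which the paper leaves implicit.
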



%

\section{Concluding remarks}\label{sec:conclusions}

The most basic underlying principle in modeling is that of parsimony where we seek a minimal number of parameters to explain the data. Historically, statistical reasoning was sought to mathematize the search for exact linear algebraic relations  giving rise to methods of principle component analysis, factor analysis and so on, see e.g., \cite{ning2015linear,picci1986dynamic}. On a parallel route a sizable part of the literature has been devoted to statistical rationales for trading off inaccuracy with model complexity, see e.g., \cite{akaike1998information}. At this point, the art of modeling has accumulated a huge arsenal of ideas that includes regression analysis, likelihood methods, information criteria, regularization, and many more. The importance is further underscored by the mere citation-count of modern accounts of these subjects, e.g., \cite{cohen2013applied} which surpassed hundred thousand citations in few short years.

In the present work we dealt with dynamical models. While the topic has a long history \cite{picci1986dynamic}, issues of how sampling affects dynamical relations have not received sufficient attention. In fact, models are sought at a prespecified time-scale which often coincides with that at which data has been recorded, see e.g., the extensive literature on modern subspace identification methods \cite{van2012subspace}. While sampling rate selection is being discussed in various contexts, in all literature on dynamical relations (e.g., subspace methods, dynamical factor analysis, and so on) the typical assumption is that sampling is ``sufficiently fast'' with the premise that dynamical relations are not impacted if sought at the time-scale of the observation process.

The point of this paper is to bring attention to the fact that, while the sampling rate may or may not be ``fast enough,'' one may retrieve exact dynamical relations at a finer time-scale from models at a coarse time-resolution.
Should the generating mechanism dictate dynamic dependencies that truly originate at time-scale finer than the observational time rate, those dependencies may stay undetected or be poorly identified at the observation rate. 
Building on the paradigm of linear dynamical models and second-order processes we highlight the consequent dictum to seek models at the finest possible time-resolution that are consistent with the data at the given observation time scale.
The implication is that, testing the dimenisionality of ``lifted models''
is a logical indicator of parsimony in system identification.

In summary, our setting involves dynamic (i.e., difference differential) linear relations between variables and stochastic noise. Guidelines that emerge can be summarized as follows:
\begin{itemize}
\item[a)] Given data at observation time-intervals, determine linear dynamical models with parameters $(A,B,C,D)$ using mostly standard techniques.
\item[b)] Determine whether the corresponding discrete time model originates from a continuous-time one via sampling the output process, and if not, determine if there is a discrete-time model at a faster time-scale which is consistent.
\item[c)] Dynamic relations between variables ought to be sought for the {\em lifted model at the finest time-scale}.
\end{itemize}

System identification techniques may be tuned to the idea that parsimony is sought at a scale other than that at which the system $(A,B,C,D)$ is seen to operate. To this end, low-rank regularizers and the postulate of observational noise could be redirected into fitting low complexity models of the {\em lifted } model $(F,G,H)$ of e.g., Theorem \ref{prop:discretemodel}. Thus, the principal contribution of this work is to introduce the idea of {\em lifting identified models to a finer time-scale before assessing their complexity}.
Sampling or subsampling a random process, and generating data at a rate that is lower than the native time scale, masks the exactness of any pre-existing dynamic relations.

\bibliographystyle{IEEEtran}
\bibliography{refs}

\end{document}